  \providecommand\BibTeX{{%
    \normalfont B\kern-0.5em{\scshape i\kern-0.25em b}\kern-0.8em\TeX}}}
\newenvironment{algoFigure}[1][htb]{%
		\let\c@algorithm\c@figure
    \renewcommand{\ALG@name}{Figure}
    \begin{algorithm}[#1]%
    }{\end{algorithm}
}
\newtheorem{theorem}{Theorem}
\newaliascnt{lem}{theorem}  
\newaliascnt{corol}{theorem}  
\newaliascnt{observation}{theorem}  
\newtheorem{observation}[observation]{Observation}  
\newtheorem{clm}{Claim}[theorem]
\newaliascnt{observationSub}{clm}  
\newtheorem{observationSub}[observationSub]{Observation}  
\newtheorem{subclm}{Subclaim}[clm]
\newenvironment{temptheoremcounter}[1]
	{\par\edef\savedtheoremnumber{\number\value{theorem}}
		\setcounterref{theorem}{#1}
		\addtocounter{theorem}{-1}}
	{\par\setcounter{theorem}{\savedtheoremnumber}}
\newcommand{\MP}[1]{{\color{red}\rule{1pt}{5pt}}\marginpar{\tiny {#1}}}
\newcommand{\RMP}[1]{}
\newcommand{\BMP}[1]{}
\definecolor{purple}{rgb}{0.6,0,0.6}
\definecolor{darkgreen}{rgb}{0,0.4,0}
\newcommand{\rmv}[1]{}
\newcommand{\vDone}{\textbf{done}}
\newcommand{\oread}{\textsc{read}}
\newcommand{\otest}{\textsc{test}}
\newcommand{\oset}{\textsc{set}}
\newcommand{\qH}[1]{\qed}
\newcommand{\seth}{\mathcal{H}}
\newcommand{\CDeli}{C_\textit{deli}}
\newcommand{\HBi}{H_\textit{bi}}
\newcommand{\str}{\textrm{write strong-linearization}} 
\newcommand{\sly}{\textrm{write strongly-linearizable}}
\newcommand{\CInit}{C_{{\textit{init}}}}
\begin{document}

\title{On Strong Linearizability and Write Strong Linearizability
in Message-Passing}

\author{David Yu Cheng Chan}
\email{david.chan1@ucalgary.ca}
\affiliation{%
	\institution{University of Calgary}
	\city{Calgary}
	\state{Alberta}
	\country{Canada}%
}

\author{Vassos Hadzilacos}
\email{vassos@cs.toronto.edu}
\affiliation{%
	\institution{University of Toronto}
	\city{Toronto}
	\state{Ontario}
	\country{Canada}%
}

\author{Xing Hu}
\email{xing@cs.toronto.edu}
\affiliation{%
	\institution{University of Toronto}
	\city{Toronto}
	\state{Ontario}
	\country{Canada}%
}

\author{Sam Toueg}
\email{sam@cs.toronto.edu}
\affiliation{%
	\institution{University of Toronto}
	\city{Toronto}
	\state{Ontario}
	\country{Canada}%
}
 

\begin{abstract}

We prove that in asynchronous message-passing systems where at most one process may crash,
	there is no lock-free
	strongly linearizable implementation of a weak object that we call Test-\emph{or}-Set (ToS).
	This object allows a single distinguished process to apply the $\oset$ operation once,
and a \emph{different} distinguished process to apply the $\otest$ operation also once.
Since this weak object can be directly implemented by a single-writer single-reader (SWSR) register
	(and other common objects such as max-register,  snapshot and counter),
	this result implies that there is no $1$-resilient lock-free strongly linearizable implementation
	of a SWSR register (and of these other objects) in message-passing systems.

We also prove that there is no $1$-resilient lock-free \emph{write} strongly-linearizable implementation of a 2-writer 1-reader (2W1R) register in asynchronous message-passing systems.

\end{abstract}

\maketitle

\section{Introduction}
In seminal work, Golab, Higham, and Woelfel showed that linearizability has the following limitation: 
	a \emph{randomized} algorithm that works with atomic objects against a strong adversary
	may lose some of its properties if we replace the atomic objects that it uses with objects that are only linearizable~\cite{sl11}.
To address this, they proposed a stronger version of linearizability, called \emph{strong linearizability}:
	intuitively, while in linearizability the order of all operations can be determined ``off-line'' given the entire execution,
	in strong linearizability the order of all operations has to be fixed irrevocably ``on-line'' without knowing the rest of the execution.
Golab \emph{et al.} proved that strongly linearizable (implementations of) objects are
	``as good'' as atomic objects for randomized algorithms against a strong adversary:
	they can replace atomic objects while preserving the algorithm's correctness properties.
 
There are many cases, however, where strong linearizability is impossible to achieve.
In particular, for shared-memory systems, Helmi \emph{et al.} proved that
	a large class of so-called \emph{non-trivial} objects including multi-writer registers, max-registers, snapshots, and counters,
	do not have strongly linearizable non-blocking
	implementations from single-writer multi-reader (SWMR) registers~\cite{sl12},



%
In this paper, we consider asynchronous message-passing systems where at most one process may crash,
	and show that in such systems
	there is no lock-free strongly linearizable implementation of
	a weak object that we call \emph{Test-or-Set (ToS)}:\footnote{Henceforth,
	we say that an object implementation is \emph{$1$-resilient lock-free}
	if it is lock-free under the assumption that at most one process may crash.
	This progress condition is defined more precisely in Section~\ref{Objects}.}
	with ToS, 
	one distinguished process can apply the $\oset$ operation once,
	and another distinguished process can apply the $\otest$ operation also only once:
	the $\otest$ operation returns $1$ if $\oset$ has previously been applied, and it returns $0$ otherwise.

Since a single-writer single-reader (SWSR) register directly implements a ToS object, the above result immediately
	implies that there is no $1$-resilient lock-free
	strongly linearizable implementation of a
	SWSR register in message-passing systems.
This result strengthens a recent result by Attiya, Enea, and Welch
	which shows that \emph{multi}-writer registers do not have strongly linearizable \emph{nonblocking}
	implementations
	in message-passing systems~\cite{AEW21},
	\footnote{Appendix~\ref{appendix} shows that the nonblocking progress condition defined in~\cite{AEW21} implies $1$-resilient lock-freedom.}
	and it also answers an open question asked in that paper,
	namely, whether there is a (fault-tolerant) strongly linearizable implementation
	of a \emph{single}-writer register in message-passing systems.
Attiya \emph{et al.} also prove that max-registers, snapshot, and counters objects do not have
	nonblocking strongly linearizable implementations in message-passing systems.
Since max-registers, snapshot, and counters also directly implement a ToS object, our result implies
	that there is no $1$-resilient lock-free strongly linearizable implementation
	of these objects in message-passing systems.
%
%

Since there is no nonblocking strongly linearizable
	implementation of MWMR registers from SWMR registers
	in shared-memory systems~\cite{sl12},
	Hadzilacos, Xing, and Toueg proposed an \emph{intermediate} notion of linearizability, called \emph{write strong-linearizability}~\cite{HHT211},
	and proved that
	there \emph{is} a wait-free write strongly-linearizable implementation of MWMR registers from SWMR registers.
This is useful because some randomized algorithms that do \emph{not} terminate with linearizable MWMR registers,
	terminate with write strongly-linearizable MWMR registers~\cite{HHT211}.

The above results raise the following question:
	is there a \emph{write} strongly-linearizable implementation of MWMR registers in message-passing systems?
We prove here that the answer is negative.
More precisely, we prove that there is no
	 $1$-resilient lock-free write strongly-linearizable implementation of even a 2-Writer 1-Reader (2W1R) register in message-passing systems.

\section{Model sketch}
We consider a standard asynchronous message-passing distributed system where processes communicate via messages and may fail by crashing.
The proof of our result is based on a bivalency argument, and we assume that the reader is familiar with the model and terminology
	introduced by Fischer, Lynch, and Paterson in~\cite{FischerEtal1985}
	to prove their famous impossibility result.
Recall that in this model, processes take \emph{steps}, and in each step a process does the following:
	it attempts to receive a message $m$ previously sent to it  ($m=\bot$ if does not receive any message),
	it changes state according to the message received, and it sends a finite set of messages to other processes.
A step taken by a process $p$ in which it receives message $m$ is denoted $e=(p,m)$. 
A configuration consists of the state of all the processes and of the message buffer
	(which consists of all the messages sent but not yet received).
A history is a sequence of steps.
A process is \emph{correct} if it takes infinitely many steps,
	we say that it \emph{crashes} otherwise.
We consider systems with reliable communication links, i.e., where every message sent to a correct process is eventually received. 
Henceforth we consider only such executions.


\subsection{Objects}\label{Objects} 
We consider implementations of ToS and 2W1R objects in message-passing systems.
In an object implementation,
  each operation spans an interval that starts
  with an \emph{invocation} and terminates with a \emph{response}.
For any two operations $o$ and $o'$, $o$ \emph{precedes} $o'$ if the response of $o$ occurs 
  before the invocation of $o'$, and $o$ \emph{is concurrent with} $o'$ if neither precedes the other.

Roughly speaking,
  an object implementation is \emph{linearizable}~\cite{HerlihyWing1990}
  if operations (which may be concurrent)
  behave as if they occur in some sequential order (called ``linearization order'')
  that is consistent with the order in which operations actually occur:
  if an operation $o$ precedes an operation $o'$, then $o$ is before $o'$ in the linearization order.

In the message-passing model,
	the occurrence of an invocation or response to an operation
	is encoded by a change in the state of the process that invoked the operation (and by a corresponding change in the configuration).
Let $\seth$ be the set of all histories of an object implementation (note that this set is prefix-closed).
A history $H \in \seth$ is \emph{sequential} if no two operations are concurrent in $H$.\footnote{Strictly-speaking, what we mean here is that ``no operations are concurrent'' in the configurations reached when applying $H$ to the initial configuration of this object implementation. For convenience, when clear, we sometimes use $H$ to denote the sequence of configurations obtained  by applying $H$ to the initial configuration of an object.}
An operation $o$ is \emph{complete} in a history $H\in\seth$ if $H$ contains both the invocation and response of $o$,
 	otherwise $o$ is \emph{pending}.
A \emph{completion} of a history $H$ is a history $H'$ obtained from $H$ by removing a subset of the pending operations
	and completing the remaining ones with responses.

Let $\seth$ be the set of all histories of an implementation of an object of type $T$.\footnote{Intuitively, an object type is specified by how this object behaves when it is accessed sequentially~\cite{HerlihyWing1990}.}

\begin{definition}\label{def-T}
A function $f$ is \emph{a linearization function for $\seth$} (with respect to type $T$)
  if it maps each history $H \in \seth$ to a sequential history $f(H)$ such that:
\begin{enumerate}
  \item \label{Tp1} $f(H)$ has exactly the same operations as some completion $H'$ of $H$.

  \item \label{Tp2} If operation $o$ precedes $o'$ in $H'$,
    then $o$ occurs before $o'$ in $f(H)$.

  \item \label{Tp3} The sequence of operations in $f(H)$ conforms to the type $T$. 
\end{enumerate}
\end{definition}

\begin{definition}\label{L-T}
An algorithm $I$ that implements an object of type $T$
  is \emph{linearizable}
  if there is a linearization function (with respect to type $T$)
  for the set of histories~$\seth$~of~$I$.
\end{definition}

We now define a weak progress condition for object implementations in message-passing systems.
Intuitively, a $1$-resilient lock-free implementation guarantees that if at most one process crashes, then the implementation is lock-free:
	whenever a correct process has a pending operation, some operation will complete.
More precisely:

%

\begin{definition}[1-resilient lock-free]\label{OneResilient}
An implementation is $1$-resilient lock-free if it satisfies the following property.
Let $C$ be any reachable configuration that has a pending operation by some process $p$.
For every infinite history $H$
	that is applicable to $C$
	such that at most one process crashes and $p$ is correct,
	there is a finite prefix $H'$ of $H$ such that more operations have completed in $H'(C)$ than in $C$.
\end{definition}

\subsection{Test-or-Set}

\newcommand{\dBit}{\textsc{b}}


We define a simple object called \emph{Test-or-Set (ToS)} as follows.
The state of a ToS object is a single bit $b$, initially $0$.
A~single distinguished process is allowed to apply a single operation 
	$\otest$ that returns the value of $b$; it can apply this operation only once.
A different distinguished process is allowed to apply a single operation
	$\oset$ that~sets~$b$~to~$1$ (and returns $\vDone$); it can apply this operation only once.
The sequential specification of the ToS object is the obvious one:
	the $\otest$ operation returns $1$
	if a $\oset$ operation has previously been applied,
	and $0$ if not.

\subsection{Strongly Linearizable Implementations of Test-or-Set}

In this section, $\seth$ is the set of all histories of a ToS implementation.

\begin{definition}\label{def-ToS}
A function $f$ is \emph{a linearization function for $\seth$} (with respect to the type ToS)
  if it maps each history $H \in \seth$ to a sequential history $f(H)$ such that:
\begin{enumerate}
  \item \label{ToSp1} $f(H)$ has exactly the same operations as some completion $H'$ of $H$.

  \item \label{ToSp2} If operation $o$ precedes $o'$ in $H'$,
    then $o$ occurs before $o'$ in $f(H)$.

  \item \label{ToSp3} For any $\otest$ operation $t$ in $f(H)$,
    if no $\oset$ operation occurs before $t$ in $f(H)$, then $t$ reads 0;
    otherwise, $t$ reads~1.
\end{enumerate}
\end{definition}

\begin{definition}\cite{sl11}\label{def:SL}
A function $f$ is \emph{a strong linearization function for $\seth$} if:

(L) $f$ is a linearization function for $\seth$, and
  
(P) for any histories $G,H \in \seth$, if $G$ is a prefix of $H$,
  then $f(G)$ is a prefix of $f(H)$.

\end{definition}

\begin{definition}\label{L-ToS}
An algorithm $I$
	that implements a \emph{ToS} object
  is \emph{strongly linearizable}
  if there is a strong linearization function (with respect to the type ToS)
  for the set of histories~$\seth$~of~$I$.
\end{definition}

%
%
%

\subsection{Write Strongly-Linearizable Implementations of Registers}

In this section, $\seth$ is the set of all histories of a register implementation.

\begin{definition}\label{def-Reg}
A function $f$ is \emph{a linearization function for $\seth$} (with respect to the type register)
  if it maps each history $H \in \seth$ to a sequential history $f(H)$ such that:
\begin{enumerate}
  \item \label{Rp1} $f(H)$ has exactly the same operations as some completion $H'$ of $H$.

  \item \label{Rp2} If operation $o$ precedes $o'$ in $H'$,
    then $o$ occurs before $o'$ in $f(H)$.

  \item \label{Rp3} For any read operation $r$ in $f(H)$,
    if no write operation occurs before $r$ in $f(H)$, then $r$ reads the initial value of the register; 
    otherwise, $r$ reads the value written by the last write operation that occurs before $r$ in $f(H)$.
\end{enumerate}
\end{definition}

\begin{definition}\cite{HHT211}\label{def:WSL}
A function $f$ is \emph{a $\str$ function for $\seth$} if:

(L) $f$ is a linearization function for $\seth$, and

(P) for any histories $G,H \in \seth$, if $G$ is a prefix of $H$, 
  then the sequence of write operations in $f(G)$ is a prefix of the sequence of write operations in $f(H)$.
\end{definition}

An algorithm $I$
	that implements a \emph{ToS} object
  is \emph{strongly linearizable}
  if there is a strong linearization function (with respect to the type ToS)
  for the set of histories~$\seth$~of~$I$.

\begin{definition}\label{L-Reag}
An algorithm $I$
	that implements a \emph{register}
  is \emph{$\sly$}
  if there is a write strong-linearization function (with respect to the type register)
  for the set of histories~$\seth$~of~$I$.
\end{definition}

\section{Impossibility Results}

\rmv{

\subsection{Test-or-Set}

\begin{theorem}
	\label{thm:nDMP}
	For all $n \ge 2$, there is no $1$-resilient lock-free\MP{$1$-resilient lock-free} $I(c,s)$linearizable
		implementation of the Test-or-Set object 
		in a message-passing system of $n$ processes.
\end{theorem}

\begin{proof}
	Suppose, for contradiction, that 
		there is a $1$-resilient lock-free\MP{$1$-resilient lock-free} strong linearizable implementation 
		of the ToS object 
		in a message-passing system of $n$ processes.

	Let $\CInit$ be the initial configuration.
	A configuration $C$ is \emph{reachable} if there is a history $H$
	such that $C=H(\CInit)$.
	A reachable configuration $C$ is \emph{$v$-valent} for $v \in \{0,1\}$ 
		if there is no finite history $H$ applicable to $C$ such that 
		the $\otest$ operation has returned $1-v$ in $H(C)$.
	A reachable configuration $C$ is \emph{bivalent} if it is not
	\emph{$v$-valent} for any $v \in \{0,1\}$.

	
	Note that the ToS object only allows one process 
		to invoke a $\otest$ operation 
		and only allows one process to invoke a $\oset$ operation.
	The other $n-2$ processes may take steps to help complete these two operations,
		but do not invoke any operations themselves.
		
	\begin{clm}
		\label{thm:nDMPInit}
		The initial configuration $\CInit$ 
			(before any operation is invoked) is bivalent.
	\end{clm}
	
	\begin{proof}
		If the $\otest$ operation is invoked and completed 
			before the $\oset$ operation is invoked, 
			the $\otest$ operation returns $0$.
		If the $\oset$ operation is invoked and completed
			before the $\otest$ operation is invoked, 
			the $\otest$ operation returns $1$ if it completes.
		Thus the initial configuration is bivalent.
	\end{proof} 
	
	\begin{clm}
		\label{thm:nDMPCompleteMeansUni}
		Let $H$ be any history applicable to $\CInit$
			such that an operation has completed in $H(\CInit)$.
		Then $H(\CInit)$ is univalent.
	\end{clm}
	
	\begin{proof}
		Suppose, for contradiction,
			that there is a history $H$ applicable to $\CInit$
			such that an operation completes in $H(\CInit)$,
			yet $H(\CInit)$ is bivalent.
		Since $H(\CInit)$ is bivalent, 
			the $\otest$ operation has not yet completed in $H(\CInit)$.
		So the $\oset$ operation 
			completes in $H(\CInit)$.
		
		Since $H(\CInit)$ is bivalent, 
			for each $v \in \{0,1\}$ 
			there is a history $H_v$ applicable to $H(\CInit)$ such that
			in $H H_v (\CInit)$, the $\otest$ operation returns $v$.
		For each $v \in \{0,1\}$, let $H'_v = H H_v$.
		
		For any strong linearization function $f$:
		\begin{itemize}
			\item In $f(H'_0)$, the $\otest$ operation 
				occurs before the $\oset$ operation
				because otherwise the $\otest$ operation could not return $0$.
			\item In $f(H'_1)$, the $\otest$ operation 
				occurs after the $\oset$ operation
				because otherwise the $\otest$ operation could not return $1$.
			\item $f(H)$ includes the $\oset$ operation since
				the $\oset$ operation completes in $H(\CInit)$.
		\end{itemize}
		
		Thus $f(H)$ cannot be a prefix of both $f(H'_0)$ and $f(H'_1)$.
		So the implementation cannot be strongly linearizable.
	\end{proof}
	
	\begin{clm}
	\label{thm:nDMPNoDeli}
		Let $C$ be any bivalent configuration reachable from $\CInit$,
			$e = (p, m)$ be any step that is applicable to $C$,
			$\mathcal{C}_{-e}$ be the set of configurations 
			reachable from $C$ without applying $e$,
			and $\mathcal{D} = e(\mathcal{C}_{-e}) = \{e(E) | E \in \mathcal{C}_{-e} \textrm{ and } e \textrm{ is applicable to } E\}$.
		Then $\mathcal{D}$ contains a bivalent configuration.
	\end{clm}
	
	\begin{proof}
		Note that since $e$ is applicable to $C$, 
			$e$ is applicable to every configuration in $\mathcal{C}_{-e}$ 
			because messages may be delayed arbitrarily.
			
		Suppose, for contradiction, that 
			$\mathcal{D}$ contains only univalent configurations.
		Since $C$ is bivalent, for each $v \in \{0,1\}$
			there is a configuration $E_v$ reachable from $C$
			such that $E_v$ is $v$-valent.
		For each $v \in \{0,1\}$,
			if $E_v$ is in $\mathcal{C}_{-e}$,
			then let $E'_v = e(E_v) \in \mathcal{D}$;
			otherwise $e$ was applied to reach $E_v$,
			so let $E'_v$ be the configuration in $\mathcal{D}$
			from which $E_v$ is reachable.
		Thus for each $v \in \{0,1\}$,
			$E'_v$ is in $\mathcal{D}$ and is $v$-valent.
		So $\mathcal{D}$ contains both $0$-valent and $1$-valent configurations.
		
		Consequently, there exists a configuration $\CDeli$ in $\mathcal{C}_{-e}$
			and a step $e' \neq e$ applicable to $\CDeli$ such that
			$e(\CDeli)$ and $e'e(\CDeli)$ are univalent configurations in $\mathcal{D}$
			that have opposite valence. 
		There are two cases: either $e$ and $e'$ are steps of different processes, 
			or $e$ and $e'$ are steps of the same process $p$.
		\begin{description}
			\item[\textbf{Case 1:}] $e$ and $e'$ are steps of different processes.
			
				Then $ee'(\CDeli) = e'e(\CDeli)$ --- 
					contradicting the fact that $e(\CDeli)$ 
					and $e'e(\CDeli)$ have opposite valence.
			
			\item[\textbf{Case 2:}] $e$ and $e'$ are steps of the same process $p$.
			
				Then since $p$ is only one process, and the implementation is $1$-resilient lock-free\MP{$1$-resilient lock-free},
					there is a finite $p$-free history $H_{-p}$ applicable to $\CDeli$
					such that an operation has completed in $H_{-p}(\CDeli)$.
				Since $H_{-p}$ is $p$-free,
					$eH_{-p} (\CDeli) = H_{-p}e(\CDeli)$ and
					$e'eH_{-p} (\CDeli) = H_{-p}e'e(\CDeli)$
					are univalent configurations with opposite valence.
				So $H_{-p}(\CDeli)$ is bivalent ---
					contradicting \autoref{thm:nDMPCompleteMeansUni}
					since an operation has completed in $H_{-p}(\CDeli)$.
		\end{description}
	\end{proof} 
	
	We now construct an infinite history $\HBi$ applicable to $\CInit$
		such that every configuration in $\HBi(\CInit)$ is bivalent
		as follows:
	\begin{enumerate}
		\item Let $C = \CInit$ ($C$ is bivalent by \autoref{thm:nDMPInit}) and 
			let $S$ be an arbitrary sequence of all $n$ processes.
		\item Let $p$ be the first process in $S$, 
			$m$ be the earliest message in the message buffer of $p$ in $C$ 
			(or $\bot$ if no such message exists), and $e = (p,m)$.
		\item By \autoref{thm:nDMPNoDeli},
			there is a bivalent configuration $C'$ reachable from $C$
			where $e$ has been applied.
		\item Move $p$ to the end of $S$,
			and let $C = C'$.
		\item Repeat from Step 2.
	\end{enumerate}
	
	By \autoref{thm:nDMPCompleteMeansUni},
		since $\HBi(\CInit)$ never reaches a univalent configuration,
		no operation ever completes despite 
		every process taking infinitely many steps in $\HBi(\CInit)$ ---
		a contradiction.
\end{proof}
}

\subsection{Test-or-Set}

\begin{theorem}
	\label{thm:ToS}
	For all $n \ge 2$, there is no $1$-resilient lock-free strongly linearizable
		implementation of the Test-or-Set object 
		in an asynchronous message-passing system of $n$ processes. 
\end{theorem}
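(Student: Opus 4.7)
The plan is to proceed by contradiction via an FLP-style bivalency argument. Suppose there exists a $1$-resilient lock-free strongly linearizable implementation of ToS. Call a reachable configuration $v$-valent (for $v \in \{0,1\}$) if no extending history ever causes the $\otest$ to return $1-v$, and call it bivalent if it is neither $0$-valent nor $1$-valent. The initial configuration $\CInit$ is bivalent: if the $\oset$ process is delayed while the $\otest$ process runs alone, the $\otest$ returns $0$, whereas if the $\oset$ completes before the $\otest$ is ever invoked, the $\otest$ returns $1$.

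The heart of the argument, and the one step where strong linearizability (rather than plain linearizability) is actually used, is the lemma that every reachable configuration in which some operation has completed is univalent. The case in which the completed operation is $\otest$ is immediate, since its return value fixes the valence. For the case in which $\oset$ has completed in some history $H(\CInit)$, I would fix a strong linearization function $f$ and, toward a contradiction, pick two extensions $H'_0$ and $H'_1$ of $H$ in which the $\otest$ returns $0$ and $1$, respectively. Since $\oset$ is completed in $H$, it must appear in $f(H)$, and the prefix property (P) of Definition~\ref{def:SL} forces $f(H)$ to be a prefix of both $f(H'_0)$ and $f(H'_1)$. In $f(H'_0)$ the $\otest$ must precede the $\oset$, which pins the $\otest$ to a position inside the shared prefix $f(H)$, before $\oset$; but the same position is then forced in $f(H'_1)$, which puts the $\otest$ before $\oset$ there as well and contradicts the fact that the $\otest$ returns $1$ in $H'_1$.

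With this lemma I would then establish the standard critical-step lemma: from any bivalent reachable $C$ and any step $e$ applicable to $C$, some bivalent configuration is reachable by applying $e$ at some point along the way. The proof is a routine case analysis on a critical pair $\CDeli$, $e(\CDeli)$, $e'e(\CDeli)$ of opposite valence: different-process steps $e$ and $e'$ commute and so the two valences must agree, while for same-process steps of some process $p$, I schedule a $p$-free continuation from $\CDeli$ that, by $1$-resilient lock-freedom, completes some operation; this continuation commutes with both $e$ and $e'$, so the resulting configuration is bivalent yet has a completed operation, contradicting the previous lemma. Finally, using the critical-step lemma I would construct an infinite fair history $\HBi$ from $\CInit$ that schedules all processes in round-robin order, has the two distinguished processes invoke their operations as soon as they are scheduled, and keeps every reached configuration bivalent; by the univalence-after-completion lemma no operation ever completes along $\HBi$, contradicting $1$-resilient lock-freedom since every process is correct and operations remain pending throughout. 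The step I expect to require the most care is the univalence-after-completion lemma, as it is where the prefix property of $f$ must be pushed against the incompatible ordering requirements imposed by the two possible return values of $\otest$.
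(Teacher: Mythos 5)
Your proposal is correct and follows essentially the same route as the paper: the same FLP-style bivalency framework, the same key lemma that a completed operation forces univalence (with strong linearizability's prefix property pitted against the two incompatible linearization orders of $\otest$ and $\oset$), and the same critical-step case analysis using $1$-resilient lock-freedom to complete an operation in a $p$-free continuation. The only cosmetic difference is where you draw the final contradiction in the key lemma (you push the $\otest$-before-$\oset$ order from $f(H'_0)$ through $f(H)$ into $f(H'_1)$, whereas the paper observes that $f(H)$ would have to be empty yet contain the completed $\oset$); both are valid.
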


\begin{proof}
	Suppose, for contradiction, that 
		there is a $1$-resilient lock-free strongly linearizable
		implementation $I$ of the Test-or-Set object 
		in a message-passing system of $n$ processes. 
	Without loss of generality,
		suppose that this implementation 
		only allows process $0$ to invoke the $\otest$ operation
		and only allows process $1$ to invoke the $\oset$ operation.
	The other $n-2$ processes may take steps to help complete operations,
		but do not invoke any operations themselves. 

Let $A$ be the message-passing algorithm that uses the implementation $I$
	as shown in Figure~\ref{algo:ToSA}. 

\begin{algoFigure}[t]
	\caption{Algorithm $A$ that uses the implementation $I$ of a test-or-set object}\label{algo:ToSA}
	\begin{algorithmic}[1]
		\Statex Code for process $0$:
		\State $I.\otest$ 
		\Statex
		\Statex Code for process $1$:
		\State $I.\oset$  
	\end{algorithmic} 
\end{algoFigure}

Henceforth, we consider the set of runs (i.e., histories) of algorithm $A$ in a message-passing system with $n \ge 2$ processes in which at most one process crashes
	and every message sent to a correct process is eventually received.

Let $\CInit$ be the initial configuration of algorithm $A$.
A configuration $C$ is \emph{reachable} if there is a history $H$
	such that $C=H(\CInit)$.
For any reachable configuration $C$:
	\begin{itemize}
	\item $C$ is \emph{$v$-valent} for $v \in \{0,1\}$ 
		if there is no finite history $H$ applicable to $C$ such that 
		the $\otest$ operation has returned $1-v$ in $H(C)$.
	\item $C$ is \emph{bivalent} if it is neither $0$-valent nor $1$-valent.  
	\item $C$ is \emph{univalent} if it is 
		\emph{$v$-valent} for exactly one $v \in \{0,1\}$.
	\end{itemize}

	\begin{observationSub}
		\label{obs:ToSVal}
		If the $\otest$ operation has returned some value $v \in \{0,1\}$ in some
		reachable configuration $C$, then $C$ is $v$-valent.
	\end{observationSub}

\begin{proof}
This holds
 because algorithm $A$ invokes at most one $\otest$ operation.
\end{proof}

	\begin{observationSub}
		\label{obs:ToSBoV}
		A reachable configuration $C$ cannot be both $0$-valent and $1$-valent. 
	\end{observationSub}
	
	\begin{proof}
		Consider the following history: starting from $C$, 
			every process takes steps in a round-robin order, and
			in each step a process receives the earliest pending message for it.
		So all processes take infinitely many steps
			and every message is eventually received.
		Since the implementation $I$ is $1$-resilient lock-free and no process crashes,
			it is clear that
			eventually 
			both processes $0$ and $1$ complete their execution of algorithm $A$;
			in particular, process $0$ completes its $\otest$ operation
			on the test-or-set object implemented by~$I$.
		So there exists a finite history $H$ applicable to $C$ such that
			the $\otest$ operation by process $0$ has completed in $H(C)$.
		Clearly, this $\otest$ operation returns some value $v \in \{0,1\}$.
		Thus $C$ is not $(1-v)$-valent.
	\end{proof}
		
	\begin{observationSub}
		\label{obs:ToSBiV}
		If $C$ is a reachable bivalent configuration, then for each $v \in \{0,1\}$, 
			there exists a finite history $H_v$ such that $H_v(C)$ is $v$-valent.
	\end{observationSub}
	
	\begin{proof}
		Let $C$ be a reachable bivalent configuration.
		Thus $C$ is neither $0$-valent nor $1$-valent.
		Thus for each $v \in \{0,1\}$,
			there exists a finite history $H_v$ applicable to $C$ such that
			the $\otest$ operation has returned $v$~in~$H_v(C)$.
		By \autoref{obs:ToSVal}, $H_v(C)$ is $v$-valent.
	\end{proof}
		
	\begin{clm}
		\label{thm:ToSInit}
		The initial configuration $\CInit$ 
			(before any operation is invoked) is bivalent.
	\end{clm}
	
	\begin{proof} 
		Consider the following run of $A$: 
			(a) process $1$ completes its $\oset$ operation;
			then (b) process $0$ invokes its $\otest$ operation and completes it.
		Clearly, the $\otest$ operation returns $1$.
		So $\CInit$ is not $0$-valent.
		
		Consider the following run of $A$:
			(a) process $0$ completes its $\otest$ operation;
			then (b) process $1$ invokes its $\oset$ operation, and completes it.
		Clearly, the $\otest$ operation returns $0$.
		So $\CInit$ is not $1$-valent.

		Since $\CInit$ is neither $0$-valent nor $1$-valent, it is bivalent.
	\end{proof} 
	
	\begin{clm}
		\label{thm:ToSCompleteMeansUni}
		If $C$ is a reachable bivalent configuration,
		then no operation has completed in $C$.
	\end{clm}

	\begin{proof}
		Suppose, for contradiction,
		that there is a history $H$ applicable to $\CInit$
			such that $C=H(\CInit)$ is bivalent,
			yet some operation has completed in $H(\CInit)$. 
		Suppose the $\otest$ operation has completed in $H(\CInit)$,
			and let $v \in \{0,1\}$  be the value returned.
		Then, by \autoref{obs:ToSVal}, $H(\CInit)$ is $v$-valent --- contradicting that $H(\CInit)$ is bivalent.
		So a $\oset$ operation has completed in $H(\CInit)$.
		
		Since $H(\CInit)$ is bivalent, it is neither $0$-valent nor $1$-valent.
		Thus for each $v \in \{0,1\}$,
			there exists a finite history $H_v$ applicable to $H(\CInit)$ such that
			the $\otest$ operation has returned $v$ in $H H_v(\CInit)$.
		For each $v \in \{0,1\}$, let $H'_v = H H_v$.
		
		Let $f$ be any strong-linearization function 
			for the set of all histories of
			the implementation $I$ of the test-or-set object.
		Then:
		\begin{itemize}
			\item In $f(H'_0)$, the $\otest$ operation 
				occurs before the $\oset$ operation
				because otherwise the $\otest$ operation could not return $0$.
			\item In $f(H'_1)$, the $\otest$ operation 
				occurs after the $\oset$ operation
				because otherwise the $\otest$ operation could not return $1$. 
		\end{itemize}  
		
		Since $H$ is a prefix of both $H'_0$ and $H'_1$,
			by Definition~\ref{def:SL},
			$f(H)$ is a prefix of both $f(H'_0)$ and $f(H'_1)$.
		So $f(H)$ does not contain any operations.
		However, since a $\oset$ operation has completed in $H(\CInit)$,
			by Definition~\ref{def-ToS} and~\ref{def:SL}
			$f(H)$ must contain this operation --- a contradiction.
	\end{proof}
	  
	\begin{clm}
	\label{thm:ToSNoDeli}
		Let $C$ be any reachable bivalent configuration,
			$e = (p, m)$ be any step that is applicable to $C$,
			$\mathcal{C}_{-e}$ be the set of configurations 
			reachable from $C$ without applying $e$,
			and $\mathcal{D} = e(\mathcal{C}_{-e}) = 
			\{e(E) | E \in \mathcal{C}_{-e} \textrm{ and } 
			e \textrm{ is applicable to } E\}$.
		Then $\mathcal{D}$ contains a bivalent configuration.
	\end{clm}
	
	\begin{proof}
		Note that since $e$ is applicable to $C$, 
			$e$ is also applicable to every configuration in $\mathcal{C}_{-e}$.
			
		Suppose, for contradiction, that 
			$\mathcal{D}$ does not contain a bivalent configuration.
		Then by \autoref{obs:ToSBoV}, $\mathcal{D}$ 
			contains only univalent configurations.
		Since $C$ is bivalent, by \autoref{obs:ToSBiV}, 
			for each $v \in \{0,1\}$ there is a configuration $E_v$
			reachable from $C$ such that $E_v$ is $v$-valent.
		For each $v \in \{0,1\}$,
			if $E_v$ is in $\mathcal{C}_{-e}$,
			then let $E'_v = e(E_v) \in \mathcal{D}$;
			otherwise $e$ was applied to reach $E_v$,
			so let $E'_v$ be a configuration in $\mathcal{D}$
			from which $E_v$ is reachable.
		Thus for each $v \in \{0,1\}$,
			$E'_v$ is in $\mathcal{D}$ and is $v$-valent.
		So $\mathcal{D}$ contains both $0$-valent and $1$-valent configurations.
		
		Consequently, there exists a configuration $\CDeli$ in $\mathcal{C}_{-e}$
			and a step $e' \neq e$ applicable to $\CDeli$ such that
			$e(\CDeli)$ and $e'e(\CDeli)$ are univalent configurations in $\mathcal{D}$
			that have opposite valence. 
		There are two cases: either $e$ and $e'$ are steps of different processes, 
			or $e$ and $e'$ are steps of the same process $p$.
		\begin{description}
			\item[\textbf{Case 1:}] $e$ and $e'$ are steps of different processes.
			
				Then $ee'(\CDeli) = e'e(\CDeli)$.
				Thus, $e'e(\CDeli)$ is both $0$-valent and $1$-valent 
					--- contradicting \autoref{obs:ToSBoV}.
							
			\item[\textbf{Case 2:}] $e$ and $e'$ are steps of the same process $p$.
				Consider the following history: starting from $\CDeli$, 
				every process except $p$ takes steps in a round-robin order, and
			        in each step a process receives the earliest pending message for it.
				So all processes except $p$ take infinitely many steps
				and every message is eventually received.
				Since the implementation $I$ is $1$-resilient lock-free,
				it is clear that
				eventually a configuration is reached where some operation has completed.
		Thus, there is a finite $p$-free history $H_{-p}$ applicable to $\CDeli$
					such that an operation has completed in $H_{-p}(\CDeli)$.
				Since $H_{-p}$ is $p$-free,
					$H_{-p}$ is applicable to both $e(\CDeli)$ and $e'e(\CDeli)$,
				and both $e$ and $e'e$ are applicable to $H_{-p}(\CDeli)$.
				Furthermore, it is clear that
					$eH_{-p} (\CDeli) = H_{-p}e(\CDeli)$ and
					$e'eH_{-p} (\CDeli) = H_{-p}e'e(\CDeli)$
					are univalent configurations with opposite valence.
				So $H_{-p}(\CDeli)$ is bivalent ---
					contradicting \autoref{thm:ToSCompleteMeansUni}
					since an operation has completed in $H_{-p}(\CDeli)$.
		\end{description}
		\vspace*{-3mm}
	\end{proof} 
	
	We now construct an infinite history $\HBi$ applicable to $\CInit$
		such that every configuration in $\HBi(\CInit)$ is bivalent
		as follows:
	\begin{enumerate}
		\item Let $C = \CInit$ ($C$ is bivalent by \autoref{thm:ToSInit}) and 
			let $S$ be an arbitrary sequence of all $n$ processes.
		\item Let $p$ be the first process in $S$, 
			$m$ be the earliest message in the message buffer of $p$ in $C$ 
			(or $\bot$ if no such message exists), and $e = (p,m)$.
		\item By \autoref{thm:ToSNoDeli},
			there is a bivalent configuration $C'$ reachable from $C$
			where $e$ has been applied.
		\item Move $p$ to the end of $S$,
			and let $C = C'$.
		\item Repeat from Step 2.
	\end{enumerate}

	Note that all the configurations ``traversed'' by 
		the application of $\HBi$ to $\CInit$ are bivalent.
	By \autoref{thm:ToSCompleteMeansUni}, no operation has completed in any of them.
	However, in the infinite history $\HBi$, 
		every process takes infinitely many steps and every message is eventually received.
	Thus, since the implementation $I$ of the test-or-set object 
		(used by algorithm $A$) is $1$-resilient lock-free,
		at least one operation by process $0$ or $1$ 
		on this test-or-set object must have completed
		--- a contradiction.
\end{proof}

Since $ToS$ can be \emph{directly} implemented by a SWSR register
	(and also by other objects such as max-registers, snapshot, and counters)
Theorem~\ref{thm:ToS} implies:

\begin{corollary}
	\label{cor:OO}
	For all $n \ge 2$, there is no $1$-resilient lock-free strongly linearizable
		implementation of
		SWSR registers in an asynchronous message-passing system of $n$ processes.
		Moreover, there is no such an implementation of
		max-registers, snapshot, and counters.
\end{corollary}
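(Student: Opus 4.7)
The plan is to derive each non-existence claim by a routine reduction to Theorem~\ref{thm:ToS}. For each object type $T$ in the list---SWSR register, max-register, single-writer snapshot, and counter---I will exhibit a constant-sized ``wrapper'' of a $T$-object that implements ToS. Composing any hypothetical $1$-resilient lock-free strongly linearizable implementation $I$ of $T$ with such a wrapper yields a $1$-resilient lock-free strongly linearizable implementation of ToS, contradicting Theorem~\ref{thm:ToS}.

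The wrappers are immediate. For a SWSR register $R$ initialized to $0$, whose writer is the $\oset$-process and whose reader is the $\otest$-process, $\oset$ just executes $R.\text{write}(1)$ and returns $\vDone$, while $\otest$ just executes $R.\text{read}()$ and returns its result. For a max-register, the wrapper is identical except that the write is the max-write of $1$. For a single-component snapshot, $\oset$ updates the lone component to $1$ and $\otest$ scans and returns its value. For a counter initialized to $0$, $\oset$ performs a single increment and $\otest$ reads, returning $0$ if the value read is $0$ and $1$ otherwise; since only the single $\oset$-process can increment, the counter's value is always in $\{0,1\}$ and this post-processing matches the ToS specification. In every case, each ToS operation performs exactly one invocation on the base object.

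To transfer strong linearizability, I will take a strong linearization function $f$ for the histories of $I$ and define $g$ on histories of the composed implementation by stripping the ToS invocation/response brackets to obtain a base-level history $\tilde H$, applying $f$ to $\tilde H$, and relabelling each base operation by its unique wrapping ToS operation. Conditions~\ref{ToSp1}--\ref{ToSp3} of Definition~\ref{def-ToS} then follow from the corresponding conditions on $f$ (the value returned by a ToS $\otest$ is exactly the value returned by its single underlying base operation, which by the $T$-specification reflects whether any preceding write/increment occurred), and prefix preservation of $g$ is inherited verbatim from that of $f$. Preservation of $1$-resilient lock-freedom is even simpler: each pending ToS operation has exactly one pending base operation underneath it, so progress in $I$ is progress in the composition.

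I do not anticipate a real technical obstacle: these are purely structural composition reductions with no algorithmic content. The only points that deserve a moment of care are (i) the counter wrapper's $\{0,1\}$ post-processing, which is harmless because it is deterministic local code executed after the base response and therefore does not interfere with the linearization point; and (ii) ensuring for the snapshot case that a single-component snapshot is used, so that a scan acts exactly as a register read with respect to the single written component.
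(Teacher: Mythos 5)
Your proposal is correct and matches the paper's own treatment: the paper derives Corollary~\ref{cor:OO} from Theorem~\ref{thm:ToS} by exactly the observation that each of these objects \emph{directly} implements ToS (one base-object operation per ToS operation), so the reduction you describe is the intended one, merely spelled out in more detail than the paper bothers to. Your extra care about transferring the strong linearization function and $1$-resilient lock-freedom through the one-operation-per-operation wrapper is sound and fills in what the paper leaves implicit.
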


%

The above corollary is stronger than a result
	in~\cite{AEW21} which states that, if three or more processes
	are allowed to invoke operations,\footnote{These processes are called \emph{clients} in~\cite{AEW21}.}
	there is no strongly linearizable nonblocking message-passing
	implementation of multi-writer registers, max-registers, counters, or snapshot objects.
Note that relating the two results is not immediate, because Corollary~\ref{cor:OO}
	is about $1$-resilient lock-free implementations, while the result in~\cite{AEW21} is about nonblocking
	implementation as defined in~\cite{AEW21}.
In Appendix~\ref{appendix}, we show that nonblocking implies $1$-resilient lock-freedom.


\subsection{2W1R 1-bit Registers}

\begin{theorem}
	\label{thm:2W1R}
	For all $n \ge 2$, there is no $1$-resilient lock-free write strongly-linearizable
		implementation of a $1$-bit 2W1R register
		in an asynchronous message-passing system of $n$ processes.
\end{theorem}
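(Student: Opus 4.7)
The plan is to mirror the bivalency argument of \autoref{thm:ToS}. Assume for contradiction that $I$ is a $1$-resilient lock-free write strongly-linearizable implementation of a $1$-bit 2W1R register, and let $f$ be its write strong-linearization function. Consider the algorithm $A$ in which process $0$ invokes $I.\text{write}(0)$, process $1$ invokes $I.\text{write}(1)$, and a (distinct, for $n \ge 3$) process $2$ invokes $r := I.\text{read}$; for $n=2$ the reader can be identified with one of the writers. Define a reachable configuration $C$ to be $v$-valent if in every finite extension in which the $\text{read}$ completes it returns $v$, and bivalent otherwise. Analogs of \autoref{obs:ToSVal}, \autoref{obs:ToSBoV}, and \autoref{obs:ToSBiV} hold with essentially the same proofs, and the initial configuration is bivalent: running the two writes sequentially in opposite orders, each followed by the read, yields reader values $1$ and $0$ respectively.

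The adaptation of \autoref{thm:ToSCompleteMeansUni} requires more care. The claim I would prove is: if $C$ is a reachable bivalent configuration, then the $\text{read}$ has not completed in $C$, and if both writes have completed in $C$, then the $\text{read}$ has not yet been invoked. The first clause is a direct analog of the Val observation. The second clause uses write strong-linearization: once both writes are completed in the history $H$ reaching $C$, their relative order in $f(H)$ is fixed and, by the prefix property on writes, it stays fixed in every extension; a $\text{read}$ invoked strictly later is placed after both writes in $f$ by precedence and therefore deterministically returns the value of the linearization-last write, forcing univalence.

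For the analog of \autoref{thm:ToSNoDeli}, I would follow the same commutativity and path argument to locate a $\CDeli$ and a step $e' \neq e$ such that $e(\CDeli)$ and $e'e(\CDeli)$ are univalent with opposite valences. Case 1 (different processes) is essentially unchanged: $ee'(\CDeli) = e'e(\CDeli)$ as configurations, so by monotonicity of valence the same configuration would be both $0$- and $1$-valent. In Case 2 (same process $p$), I would select a $p$-free history $H_{-p}$ applicable to $\CDeli$ and long enough to make $H_{-p}(\CDeli)$ univalent by the extended criterion. If $p$ is not the reader, $H_{-p}$ can drive the $\text{read}$ to completion by $1$-resilient lock-freedom, and the Val observation yields univalence. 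If $p$ is the reader, $H_{-p}$ drives both writers to completion; since $e$ and $e'$ are reader steps applicable at $\CDeli$, the reader either has not yet been invoked in $\CDeli$ (in which case the extended criterion yields univalence directly) or is pending with its subsequent behavior constrained by the already fixed write order in $f(H_{-p}(\CDeli))$ together with the real-time precedence between the reader's invocation and the completed writes. I expect this last sub-case to be the main obstacle: unlike in the ToS proof, where any single completed operation forces univalence, here we must carefully leverage both the completion of \emph{both} writes and the prefix property of $f$ on writes to pin down the reader's eventual return value.

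Having established bivalence preservation, I would close the proof exactly as in \autoref{thm:ToS}: build an infinite fair history $\HBi$ along which every configuration is bivalent, and then observe that $1$-resilient lock-freedom forces some operation by process $0$, $1$, or $2$ to complete in some finite prefix of $\HBi$ -- which, by the extended univalence criterion for bivalent configurations, is impossible, yielding the desired contradiction.
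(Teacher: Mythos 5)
Your overall strategy (bivalency, the four supporting observations/claims, the infinite bivalent schedule) matches the paper's, but there is a genuine gap, and it is exactly the sub-case you flag as ``the main obstacle.'' The root cause is your choice of algorithm $A$: you let the reader invoke $\oread$ with no synchronization relative to the two writes. Write strong linearizability fixes only the order of the \emph{writes} irrevocably; it says nothing about where a read that is \emph{concurrent} with a write gets placed. Consequently, a reachable configuration in which both writes have completed but the read is pending (invoked concurrently with at least one write) can perfectly well remain bivalent: the pending read may later be linearized between the two writes or after both, returning different values. This kills Case~2 of your bivalence-preservation claim when $p$ is the reader and the read is already pending in $\CDeli$: the $p$-free extension $H_{-p}$ completes both writes, but that does not force $H_{-p}(\CDeli)$ to be univalent, so no contradiction is obtained. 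Note also that your clause ``if both writes have completed in $C$, then the read has not yet been invoked'' is stated backwards relative to the argument you give for it: what your argument actually shows is that \emph{both writes completed and read not yet invoked} implies univalent, so in a bivalent configuration with both writes completed the read must \emph{already} have been invoked --- which is precisely the unresolved bad case.

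The paper closes this gap by engineering the algorithm so that the read is \emph{never} concurrent with any write: process $0$ is both a writer and the reader, it performs $\textsc{write}(0)$ first, and it invokes $\oread$ only after receiving an explicit ``OK'' message that process $1$ sends after completing $\textsc{write}(1)$. Then whenever the read is invoked, both writes have already completed, so any linearization must place the read after both writes, and its return value is fully determined by the (irrevocably fixed) write order. This is what makes the strong claim ``bivalent $\Rightarrow$ no operation has completed'' provable: if a write has completed in a bivalent $H(\CInit)$, the two extensions in which the read returns $0$ and $1$ force opposite write orders in $f(H'_0)$ and $f(H'_1)$, so by the write-prefix property the write sequence of $f(H)$ is empty, contradicting that a completed write must appear in $f(H)$. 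You need this stronger claim (or the synchronization that enables it) to handle the reader's steps in the bivalence-preservation lemma; without it the proof does not go through.
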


\begin{proof}
	Suppose, for contradiction, that 
		there is a $1$-resilient lock-free write strongly-linearizable
		implementation $I$ of a $1$-bit 2W1R register 
		in a message-passing system of $n$ processes.
	Without loss of generality,
		suppose that this implementation 
		has initial value $0$,
		and only allows process $0$ to invoke read operations
		and processes $0$ and $1$ to invoke write operations.
	The other $n-2$ processes may take steps to help complete operations,
		but do not invoke any operations themselves. 

Let $A$ be the message-passing algorithm that uses the implementation $I$
	as shown in Figure~\ref{algo:A}.
Note that in algorithm~$A$, process $0$ does not invoke its read operation before process $1$ completes its write.

\begin{algoFigure}[t]
	\caption{Algorithm $A$ that uses the implementation $I$ of a 2W1R register}\label{algo:A}
	\begin{algorithmic}[1]
		\Statex Code for process $0$:
		\State $I.\textsc{write}(0)$
		\Repeat
		\Until{\textsc{receive} ``OK'' from process $1$}
		\State $I.\oread$
		\Statex
		\Statex Code for process $1$:
		\State $I.\textsc{write}(1)$ 
		\State \textsc{send} ``OK'' to process $0$
	\end{algorithmic} 
\end{algoFigure}

Henceforth, we consider the set of runs (i.e., histories) of algorithm $A$ in a message-passing system with $n \ge 2$ processes in which at most one process crashes
		and every message sent to a correct process is eventually received.

Let $\CInit$ be the initial configuration of algorithm $A$.
A configuration $C$ is \emph{reachable} if there is a history $H$
	such that $C=H(\CInit)$.
For any reachable configuration $C$:
	\begin{itemize}
	\item $C$ is \emph{$v$-valent} for $v \in \{0,1\}$ 
		if there is no finite history $H$ applicable to $C$ such that 
		the read operation has returned $1-v$ in $H(C)$.
	\item $C$ is \emph{bivalent} if it is neither $0$-valent nor $1$-valent.  
	\item $C$ is \emph{univalent} if it is \emph{$v$-valent} for exactly one $v \in \{0,1\}$.
	\end{itemize}

	\begin{observationSub}
		\label{obs:Val}
		If the $\oread$ operation has returned some value $v \in \{0,1\}$ in some
		reachable configuration $C$, then $C$ is $v$-valent.
	\end{observationSub}

\begin{proof}
This holds because algorithm $A$ invokes at most one $\oread$ operation.
\end{proof}

	\begin{observationSub}
		\label{obs:BoV}
		A reachable configuration $C$ cannot be both $0$-valent and $1$-valent. 
	\end{observationSub}
	
	\begin{proof}
		Consider the following history: starting from $C$, 
			every process takes steps in a round-robin order, and
			in each step a process receives the earliest pending message for it.
		So all processes take infinitely many steps and every message is eventually received.
		Since the implementation $I$ is $1$-resilient lock-free and no process is correct,
			it is clear that
			both processes $0$ and $1$ complete their execution of algorithm $A$;
			in particular, process $0$ completes its read of the register implemented by~$I$.
		So there exists a finite history $H$ applicable to $C$ such that
			the read operation by process $0$ has completed in $H(C)$.
		Clearly, this read operation returns some value $v \in \{0,1\}$.
		Thus $C$ is not $(1-v)$-valent.
	\end{proof}

	\begin{observationSub}
		\label{obs:BiV}
		If $C$ is a reachable bivalent configuration, then for each $v \in \{0,1\}$, 
			there exists a finite history $H_v$ such that $H_v(C)$ is $v$-valent.
	\end{observationSub}
	
	\begin{proof}
		Let $C$ be a reachable bivalent configuration.
		Thus $C$ is neither $0$-valent nor $1$-valent.
		Thus for each $v \in \{0,1\}$,
			there exists a finite history $H_v$ applicable to $C$ such that
			the $\oread$ operation has returned $v$~in~$H_v(C)$.
		By \autoref{obs:Val}, $H_v(C)$ is $v$-valent.
	\end{proof}
		
	\begin{clm}
		\label{thm:2W1RInit}
		The initial configuration $\CInit$ 
			(before any operation is invoked) is bivalent.
	\end{clm}
	
	\begin{proof} 
		Consider the following run of $A$:
			(a) process $0$ completes its $\textsc{write}(0)$ operation;
			then (b) process $1$ invokes its $\textsc{write}(1)$ operation, completes it, and sends ``OK'';
			then (c) process $0$ receives ``OK'', and completes its $\oread$ operation.
		Clearly, this read returns 1.
		So $\CInit$ is not $0$-valent.

		Consider the following run of $A$:
			(a) process $1$ completes its $\textsc{write}(1)$ operation and sends ``OK'';
			then (b) process $0$ invokes its $\textsc{write}(0)$ operation, and completes it;
			then (c) process $0$ receives ``OK'', and completes its $\oread$ operation.
		Clearly, this read returns 0.
		So $\CInit$ is not $1$-valent.

		Since $\CInit$ is neither $0$-valent nor $1$-valent, it is bivalent.
	\end{proof} 
	
	\begin{clm}
		\label{thm:2W1RCompleteMeansUni}
		If $C$ is a reachable bivalent configuration,
		then no operation has completed in $C$.
	\end{clm}
	
	\begin{proof}
		Suppose, for contradiction,
			that there is a history $H$ applicable to $\CInit$
			such that $H(\CInit)$ is bivalent,
			yet some operation has completed in $H(\CInit)$.
		Suppose the $\oread$ operation has completed in $H(\CInit)$,
			and let $v \in \{0,1\}$  be the value returned.
		Then, by \autoref{obs:Val}, $H(\CInit)$ is $v$-valent --- contradicting that $H(\CInit)$ is bivalent.
		So a write operation has completed in $H(\CInit)$.
		
		Since $H(\CInit)$ is bivalent, it is neither $0$-valent nor $1$-valent.
		Thus for each $v \in \{0,1\}$,
			there exists a finite history $H_v$ applicable to $H(\CInit)$ such that
			the read operation has returned $v$ in $H H_v(\CInit)$.
		For each $v \in \{0,1\}$, let $H'_v = H H_v$.
		
		Let $f$ be any write strong-linearization function for the set of all histories of
		the implementation $I$ of the 2W1R~register.
				
		\begin{subclm}
			\label{scm:Hv}
			For each $v \in \{0,1\}$,
				in $f(H'_v)$, the $\textsc{write}(1-v)$ operation
				occurs before the $\textsc{write}(v)$ operation.\footnote{In an abuse of notation,
				by $f(H'_v)$ we mean $f(H'^I_v)$
				where $H'^I_v$ are
				the steps of the implementation $I$ in $H'_v$.	}
		\end{subclm}
		
		\begin{proof} 
			Recall that in algorithm $A$,
				process $0$ does not invoke its read operation 
				before process $1$ completes its write.
			So, since the read operation has completed in $H'_v(\CInit)$, 
				all three operations have completed in $H'_v(\CInit)$.  
			Thus in $f(H'_v)$, the $\oread$ operation, which returns $v$, must occur after the two $\textsc{write}(-)$ operations.
			Therefore, 
				in $f(H'_v)$, the $\textsc{write}(1-v)$ operation
				occurs before the $\textsc{write}(v)$ operation. 
		\end{proof}

			Since $H$ is a prefix of both $H'_0$ and $H'_1$,
			by Definition~\ref{def:SL},
			$f(H)$ is a prefix of both $f(H'_0)$ and $f(H'_1)$.
		So $f(H)$ does not contain any operations.
		However, since a $\oset$ operation has completed in $H(\CInit)$,
			by Definition~\ref{def-ToS} and~\ref{def:SL}
			$f(H)$ must contain this operation --- a contradiction.

		By \autoref{scm:Hv}, in $f(H'_0)$, the $\textsc{write}(1)$
				occurs before the $\textsc{write}(0)$, and in
				$f(H'_1)$, the $\textsc{write}(0)$
				occurs before the $\textsc{write}(1)$.
		Since $H$ is a prefix of both $H'_0$ and $H'_1$,
			by Definition~\ref{def:WSL},
			the sequence of $\textsc{write}(-)$ operations
			in  $f(H)$ is a prefix of
			the sequence of the $\textsc{write}(-)$ operations
			in both $f(H'_0)$ and $f(H'_1)$.
		So $f(H)$ does not contain any $\textsc{write}(-)$ operation.
		However, since a $\textsc{write}(-)$ operation has completed in $H(\CInit)$,
			by Definition~\ref{def-Reg} and~\ref{def:WSL}
			$f(H)$ must contain this operation --- a contradiction.
	\end{proof}
	
	The rest of the proof is almost the same as 
		in the proof of Theorem~\ref{thm:ToS}.
	
	\begin{clm}
	\label{thm:2W1RNoDeli}
		Let $C$ be any reachable bivalent configuration,
			$e = (p, m)$ be any step that is applicable to $C$,
			$\mathcal{C}_{-e}$ be the set of configurations 
			reachable from $C$ without applying $e$,
			and $\mathcal{D} = e(\mathcal{C}_{-e}) = 
			\{e(E) | E \in \mathcal{C}_{-e} \textrm{ and } 
			e \textrm{ is applicable to } E\}$.
		Then $\mathcal{D}$ contains a bivalent configuration.
	\end{clm}
	
	\begin{proof}
		Note that since $e$ is applicable to $C$, 
			$e$ is also applicable to every configuration in $\mathcal{C}_{-e}$.
			
		Suppose, for contradiction, that 
			$\mathcal{D}$ does not contain a bivalent configuration.
		Then by \autoref{obs:BoV}, $\mathcal{D}$ 
			contains only univalent configurations.
		Since $C$ is bivalent, by \autoref{obs:BiV}, 
			for each $v \in \{0,1\}$ there is a configuration $E_v$
			reachable from $C$ such that $E_v$ is $v$-valent.
		For each $v \in \{0,1\}$,
			if $E_v$ is in $\mathcal{C}_{-e}$,
			then let $E'_v = e(E_v) \in \mathcal{D}$;
			otherwise $e$ was applied to reach $E_v$,
			so let $E'_v$ be a configuration in $\mathcal{D}$
			from which $E_v$ is reachable.
		Thus for each $v \in \{0,1\}$,
			$E'_v$ is in $\mathcal{D}$ and is $v$-valent.
		So $\mathcal{D}$ contains both $0$-valent and $1$-valent configurations.
		
		Consequently, there exists a configuration $\CDeli$ in $\mathcal{C}_{-e}$
			and a step $e' \neq e$ applicable to $\CDeli$ such that
			$e(\CDeli)$ and $e'e(\CDeli)$ are univalent configurations in $\mathcal{D}$
			that have opposite valence. 
		There are two cases: either $e$ and $e'$ are steps of different processes, 
			or $e$ and $e'$ are steps of the same process $p$.
		\begin{description}
			\item[\textbf{Case 1:}] $e$ and $e'$ are steps of different processes.
			
				Then $ee'(\CDeli) = e'e(\CDeli)$.
				Thus, $e'e(\CDeli)$ is both $0$-valent and $1$-valent --- contradicting \autoref{obs:BoV}.
							
			\item[\textbf{Case 2:}] $e$ and $e'$ are steps of the same process $p$.
				Consider the following history: starting from $\CDeli$, 
				every process except $p$ takes steps in a round-robin order, and
			        in each step a process receives the earliest pending message for it.
				So all processes except $p$ take infinitely many steps
				and every message is eventually received.
				Since the implementation $I$ is $1$-resilient lock-free,
				it is clear that
				eventually a configuration is reached where some operation has completed.
				Thus,
					there is a finite $p$-free history $H_{-p}$ applicable to $\CDeli$
					such that an operation has completed in $H_{-p}(\CDeli)$.
				Since $H_{-p}$ is $p$-free,
					$H_{-p}$ is applicable to both $e(\CDeli)$ and $e'e(\CDeli)$,
				and both $e$ and $e'e$ are applicable to $H_{-p}(\CDeli)$.
				Furthermore, it is clear that
					$eH_{-p} (\CDeli) = H_{-p}e(\CDeli)$ and
					$e'eH_{-p} (\CDeli) = H_{-p}e'e(\CDeli)$
					are univalent configurations with opposite valence.
				So $H_{-p}(\CDeli)$ is bivalent ---
					contradicting \autoref{thm:2W1RCompleteMeansUni}
					since an operation has completed in $H_{-p}(\CDeli)$.
		\end{description}
		\vspace*{-3mm}
	\end{proof} 
	
	We now construct an infinite history $\HBi$ applicable to $\CInit$
		such that every configuration in $\HBi(\CInit)$ is bivalent
		as follows:
	\begin{enumerate}
		\item Let $C = \CInit$ ($C$ is bivalent by \autoref{thm:2W1RInit}) and 
			let $S$ be an arbitrary sequence of all $n$ processes.
		\item Let $p$ be the first process in $S$, 
			$m$ be the earliest message in the message buffer of $p$ in $C$ 
			(or $\bot$ if no such message exists), and $e = (p,m)$.
		\item By \autoref{thm:2W1RNoDeli},
			there is a bivalent configuration $C'$ reachable from $C$
			where $e$ has been applied.
		\item Move $p$ to the end of $S$,
			and let $C = C'$.
		\item Repeat from Step 2.
	\end{enumerate}

	Note that all the configurations ``traversed'' by 
		the application of $\HBi$ to $\CInit$ are bivalent.
	By \autoref{thm:2W1RCompleteMeansUni}, no operation has completed in any of them.
	However, in the infinite history $\HBi$, 
		every process takes infinitely many steps and every message is eventually received.
	Thus, since the implementation $I$ of the 2W1R register 
		(used by algorithm $A$) is $1$-resilient lock-free,
		at least one operation by process $0$ or $1$ 
		on this register must have completed
		--- a contradiction.
\end{proof}

\section{Conclusion}
We proved that there is no $1$-resilient lock-free strongly linearizable implementation of the ToS object 
	in asynchronous message-passing systems.
This impossibility result is strong in two dimentions:
	the progress condition is weak
	and the ToS object itself is also weak.
In particular, it implies that there is no $1$-resilient lock-free strongly linearizable implementation of a SWSR register
	in message-passing systems.


%
%
We also proved that there is no $1$-resilient lock-free \emph{write} strongly-linearizable implementation of a 2W1R register
	in asynchronous message-passing systems.
This is in contrast to shared-memory systems,
	since there is a wait-free write strongly-linearizable implementation of MWMR registers from SWMR registers~\cite{HHT211}.


\bibliographystyle{spmpsci}      
\bibliography{arXiv}   

\newpage
\appendix

\section{Appendix: Comparing two progress properties}\label{appendix}

We now compare the progress property that we considered in this paper,
	namely, $1$-resilient lock-freedom, and the nonblocking property defined in~\cite{AEW21}.
Recall that:
%

\begin{temptheoremcounter}{OneResilient}
\begin{definition}[1-resilient lock-free]
An implementation is 1-resilient lock-free if it satisfies the following property.
Let $C$ be any reachable configuration that has a pending operation by some process $p$.
For every infinite history $H$
	that is applicable to $C$
	such that at most one process crashes and $p$ is correct,
	there is a finite prefix $H'$ of $H$ such that more operations have completed in $H'(C)$ than in $C$.
\end{definition}
\end{temptheoremcounter}

In \cite{AEW21}, Attiya \emph{et al.} consider object implementations, denoted $I(c,s)$,
	for a set of $c$ \emph{clients} and a disjoint set of $s$ \emph{servers}:
	intuitively, only clients are allowed to invoke object operations, while servers can only help clients perform their operations.
They define the following progress guarantee for such implementations ($g_i$ denotes a configuration): 

\begin{definition} [nonblocking]\label{NonBlocking}\cite{AEW21}
An implementation $I(c,s)$
	is nonblocking iff for every infinite
	execution $e = g_0 \rightarrow ... \rightarrow g_k \rightarrow...$ and $k > 0$, 
	if at least one client and $s-(c-1)$ servers execute a step infinitely often in $e$,
	then some invocation completes after $g_k$ (i.e., the sequence of transitions in $e$ after $g_k$ includes a return transition).
\end{definition}

We first note that nonblocking implementations are 1-resilient lock-free. More precisely: 
 
\begin{observation}\label{obs:one}
For all $c \ge 2$, every nonblocking implementation $I(c,s)$ of an object $O$ 
	is 1-resilient lock-free.
\end{observation}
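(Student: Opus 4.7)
The plan is to verify directly that, given any witness to the premise of 1-resilient lock-freedom, the hypothesis of the nonblocking definition is satisfied on the surrounding execution; then the conclusion of nonblocking immediately yields the 1-resilient lock-free conclusion.

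First I would fix a reachable configuration $C$ that has a pending operation by some process $p$, together with an infinite history $H$ applicable to $C$ in which at most one process crashes and $p$ is correct. Since only clients invoke operations in an $I(c,s)$ implementation, $p$ must be a client, and because $p$ is correct $p$ takes infinitely many steps. So at least one client takes infinitely many steps in the infinite execution $e$ obtained by reaching $C$ from the initial configuration and then applying $H$.

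Next I would count the servers that take infinitely many steps in $e$. Because at most one process crashes in $H$, at most one server crashes, so at least $s-1$ servers are correct and each takes infinitely many steps. The key arithmetic is $s-1 \ge s-(c-1)$, which holds precisely when $c \ge 2$; this is the only place where the assumption $c \ge 2$ of the observation is actually used. Hence at least $s-(c-1)$ servers take infinitely many steps in $e$, and so the full hypothesis of the nonblocking definition is met.

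Finally, I would apply the nonblocking condition to $e$ with $g_k$ identified as $C$. The conclusion gives a return transition somewhere in $e$ after $C$, which is witnessed by some finite prefix $H'$ of $H$ in which an invocation completes. Thus more operations have completed in $H'(C)$ than in $C$, which is precisely the 1-resilient lock-free conclusion. There is no substantive obstacle in this proof; the only point requiring any care is pinpointing where the hypothesis $c \ge 2$ enters, namely in the inequality $s-1 \ge s-(c-1)$ that matches the server count provided by 1-resilient crashes to the server count required by the nonblocking definition.
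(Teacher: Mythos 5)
Your proposal is correct and follows essentially the same route as the paper's proof: embed the given configuration $C$ and infinite history $H$ into an infinite execution $e$ with $g_k = C$, check that one correct client and at least $s-1 \ge s-(c-1)$ correct servers take infinitely many steps, and invoke the nonblocking guarantee to obtain a completed invocation in some finite prefix $H'$ of $H$. Your explicit remark that $c \ge 2$ enters only through the inequality $s-1 \ge s-(c-1)$ matches the paper's (more tacit) use of the same fact.
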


\begin{proof}
Let $I(c,s)$ be a nonblocking implementation of an object $O$ for some $c\ge2$.
We claim that it is $1$-resilient lock-free. 
Let $C$ be any reachable configuration that has
	a pending operation by some process $p$ (note that $p$ is a client).
So $C=H_0(\CInit)$ for some non-empty finite history $H_0$.
Consider any infinite history $H$
	that is applicable to $C$
	such that at most one process crashes and $p$ is correct.
Let $e$ be the infinite execution that corresponds to the history $H_0 H$ (applied to $\CInit$).
Thus, $e= g_0 \rightarrow ... \rightarrow g_k \rightarrow...$, where $g_0 = \CInit$ and $g_k = C$ for some $k > 0$.
Since (1)~$I(c,s)$ is nonblocking,
	 (2)~a client (namely, $p$) is correct,
	 and
	 (3)~$s-1 \ge s-(c-1)$ servers are also correct,
	 some invocation completes after $g_k$.
Thus, there is a finite prefix $H'$ of $H$ such that more operations have completed in $H'(C)$ than in $C$.
\end{proof}

In contrast to the above, 1-resilient lock-free implementations are not necessarily nonblocking.
To see this, consider a system with $c \ge 2$ clients and $s \ge c$ servers:
	in executions where exactly one client and one server crashes (i.e., there are \emph{two} crashes), nonblocking still guarantees some progress,
	while $1$-resilient lock-freedom does not make any progress guarantees.
In fact:

\begin{observation}\label{obs:two}
There is a $1$-resilient lock-free implementation $I$ of an object $O$ for a system with $n$ processes (consisting of $c=2$ clients
	and $s \ge c$ servers) such that $I$ is \emph{not} nonblocking.
\end{observation}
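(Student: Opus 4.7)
The plan is to construct a simple implementation $I$ of a trivial object $O$---say, one whose only operation invariably returns ``ok''---for a system with $c=2$ clients (called client 0 and client 1) and $s\ge 2$ servers, as follows. When client $i$ invokes an operation, it sends a ``request'' message to all $s$ servers and to the other client $1-i$. Every server, upon receiving a ``request'', replies with a ``response'' to the sender; client $1-i$, upon receiving a ``request'' from client $i$, also replies with a ``response''. Client $i$ completes its operation as soon as it has either (a)~received a ``response'' from every one of the $s$ servers, or (b)~received a ``response'' from client $1-i$. Because $O$ is trivial, every history of $I$ is vacuously linearizable.

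To verify that $I$ is $1$-resilient lock-free, I would do a small case analysis. Fix a reachable configuration with a pending operation of some process $p$, and consider any infinite extension in which $p$ is correct and at most one process crashes. If no process crashes, all $s$ servers eventually respond and condition~(a) is met. If exactly one server crashes, condition~(a) may be blocked, but the other client is correct and eventually answers $p$'s request, so condition~(b) is met. If exactly one client crashes it must be $1-p$ (since $p$ is correct), and then all $s$ servers remain correct and answer $p$, yielding condition~(a). In every case, some pending operation eventually completes.

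To show that $I$ is not nonblocking, I would exhibit one infinite execution $e$ in which client 0 invokes the operation and takes infinitely many steps, $s-1$ servers take infinitely many steps, but client 1 and one designated server take only finitely many steps (and are thus effectively ``crashed''). The hypothesis of the nonblocking definition is satisfied, since at least one client and $s-(c-1)=s-1$ servers take infinitely many steps. Yet in $e$, client 0 never collects a ``response'' from the silent server (blocking (a)) nor from client 1 (blocking (b)), so its operation never completes and no other invocation is even made---contradicting nonblocking. The only design concern, more a sanity check than a real obstacle, is choosing the two completion conditions so that every single-process crash leaves at least one arm satisfiable while a combined client-plus-server crash breaks both; the disjunction ``all $s$ servers or the other client'' accomplishes exactly this, giving the desired separation between the two progress properties.
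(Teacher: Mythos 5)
Your proposal is correct and follows essentially the same approach as the paper: a trivial object implemented by an acknowledgement-collection scheme calibrated to tolerate exactly one crash, defeated by an execution in which one client and one server are both silent yet the nonblocking hypothesis (one client and $s-(c-1)=s-1$ servers taking infinitely many steps) still holds. The only difference is cosmetic: the paper has the invoker wait for $n-2$ of the other $n-1$ processes, whereas you use the disjunction ``all $s$ servers or the other client,'' which achieves the same fault-tolerance threshold.
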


\begin{proof}
Let $O$ be the trivial object that returns $0$ to all operations.
Let $I$ be the following implementation of $O$ in a system with $n$ processes ($c=2$ clients
	and $s \ge c$ servers).
To invoke an operation on $O$, a client process $p$ sends a message to the other $n-1$ processes.
When a process receives this message, it sends an acknowledgement back to $p$.
Process $p$ waits until it receives $n-2$ acknowledgements and then it returns $0$.
Note that:

\begin{itemize}
\item $I$ is $1$-resilient lock-free because 
	if at most $1$ process crashes, every operation invoked by a correct process eventually completes.

\item $I$ is \emph{not} nonblocking.
To see this, consider an execution $e$ in which one of the two clients and one server crash
	and take no steps (every other process is correct).
In $e$ at least one client and $s-1 = s - (c-1)$ servers take infinitely many steps.
By Definition~\ref{NonBlocking}, nonblocking requires that at least one invocation must complete in execution $e$.
But implementation~$I$ does not satisfy this: since two processes are crashed from the start in $e$,
	the correct client cannot complete any operation (it will never receive $n-2$ acknowledgements)
	so no invocation ever completes in $e$.
\end{itemize}
\vspace*{-3mm}
\end{proof}

Observations~\ref{obs:one} and~\ref{obs:two} imply that
	the nonblocking progress property (Definition~\ref{NonBlocking})
	is strictly stronger than the $1$-resilient lock-free progress property defined here (Definition~\ref{OneResilient}).

\end{document}